\newcommand{\beq}{\begin{equation}}
\newcommand{\eeq}{\end{equation}}
\newcommand{\bqa}{\begin{eqnarray}}
\newcommand{\eqa}{\end{eqnarray}}
\definecolor{green}{rgb}{0.00,0.50,0.00}
\newtheorem{theorem}{Theorem}
\newtheorem{proposition}[theorem]{Proposition}
\newtheorem{remark}[theorem]{Remark}
\newenvironment{proof}[1][Proof]{\noindent\textbf{#1.} }{\ \rule{0.5em}{0.5em}}
\begin{document}

\title{The Gisin-Percival Stochastic Schr\"{o}dinger Equation from
Standard Quantum Filtering Theory}

\author{John E.~Gough} \email{jug@aber.ac.uk}
   \affiliation{Aberystwyth University, SY23 3BZ, Wales, United Kingdom}
 %\author{Hendra I.~Nurdin} \email{h.nurdin@unsw.edu.au}
  %\affiliation{University of New South Wales, Sydney, NSW 2052, Australia}
\date{\today}

\begin{abstract}
We show that the quantum state diffusion equation of Gisin and Percival, driven by complex Wiener noise, is equivalent up to a global stochastic phase to quantum trajectory models.  With an appropriate feedback scheme, we set up an analogue continuous measurement model with exactly simulates the Gisin-Percival quantum state diffusion.
\end{abstract}

\maketitle

\affiliation{Institute for Mathematics, Physics, and Computer Sciences, Aberystwyth University,
SY23 3BZ, Wales, United Kingdom}
 
\section{Introduction}
Standard quantum mechanics tells us that closed systems evolve according to the Schr\"{o}dinger equation, which provides a deterministic and reversible rule for states. The idea that quantum systems may evolve as stochastic processes has emerged in several ways: as an extension of standard formalism to allow for continuous measurements, as a numerical technique for simulating open systems, and as a new law of Nature. These approaches have very similar requirements and this leads to a convergence of structure that is pleasing from a mathematical point of view, \cite{Decoherence}. However, this is commonality implies that distinguishing between different physical situations is difficult. 
 
The Gisin-Percival model \cite{Gisin-Percival} is exceptional in that it is postulated on the grounds of covariance (up to a global phase) under transformations of the coupling operators that leave the GKS-Lindblad generator invariant (a symmetry that is typically broken in quantum filtering/trajectories models), and it achieves this by using specific complex Wiener possesses as driving noise instead of the more typical real Brownian motions. 

This leads to the question of whether it is possible to realize the Gisin-Percival equation as a special instance of a quantum trajectory model, or whether it is truly distinguished from this class. 
The problem was first raised and addressed by Wiseman and Milburn \cite{WM93}: they consider a heterodyne detection scheme with a local oscillator with a finite detuning $\Omega$ from the the system
and show that Gisin equation follows as the infinite detuning limit of the associated stochastic Schr\"{o}dinger equation, see also \cite{WM09} - the particular structure of the Gisin-Percival equation arising in effect from the vanishing of the rapidly oscillating terms.

We treat this problem directly in Section \ref{sec:Rep} without recourse to limits. In Proposition \ref{prop:bv2}) we show that an explicit homodyne quantum trajectory problem can reproduce the Gisin-Percival solution up to a global stochastic phase that depends causally on the noise. Though, in Proposition \ref{prop:bv}, we show that the stochastic process must have nontrivial quadratic variation. In Section \ref{sec:simulation}, we show that it actually possible, with the additional use of feedback, to obtain a standard homodyne detection scheme that exactly reproduces the Gisin-Percival evolution.

\subsection{Stochastic Schr\"{o}dinger Equations}

We fix a Hilbert space $\mathfrak{h}$ as the system state space, and all operators will be assumed to act on this space. 
Let $\mathbf{L}=[L_{1},L_{2},\cdots ,L_{n}]^{\top }$ be a collection of operators, and $H$ be a self-adjoint operator. 
%Note that the Lindbladian does not depend on $S$.
The $\left( \mathbf{L},H\right) $ \textit{Belavkin-Schr\"{o}dinger equation} takes the form \cite{Decoherence}
\begin{eqnarray}
d|\psi _{t}\rangle  =\left( -iH-\frac{1}{2}\sum_{k}\left( L_{k}^{\ast
}L_{k}-\lambda _{k}L_{k}+\frac{1}{4}\lambda _{k}^{2}\right) \right) |\psi
_{t}\rangle \,dt  +\sum_{k}\left( L_{k}-\frac{1}{2}\lambda _{k}\right) |\psi _{t}\rangle
\,dI_{k}\left( t\right)   \label{eq:BS_equation}
\end{eqnarray}
where 
\begin{eqnarray}
\lambda _{k}\left( t\right) =\langle \psi _{t}|\left( L_{k}+L_{k}^{\ast
}\right) \psi _{t}\rangle ,
\label{eq:lambda}
\end{eqnarray}
and $\left\{ I_{k}\right\} $ are a family of independent standard Wiener processes: 
\begin{eqnarray}
dI_{j}\left( t\right) \,dI_{k}\left( t\right) =\delta _{jk}dt.
\label{eq:I}
\end{eqnarray}

In contrast, let $\mathbf{R}=[R_{1},R_{2},\cdots ,R_{m}]^{\top }$ be a collection of operators, then the 
$\left( \mathbf{R},H\right) $ \textit{Gisin-Percival-Schr\"{o}dinger equation} takes the form \cite{Decoherence}
\begin{eqnarray}
d|\tilde{\psi}_{t}\rangle  =\left( -iH-\frac{1}{2}\sum_{k}\left(
R_{k}^{\ast }R_{k}-2c_{k}^{\ast }R_{k}+|c_{k}|^{2}\right) \right) |\tilde{%
\psi}_{t}\rangle \,dt +\sum_{k}\left( R_{k}-c_{k}\right) |\tilde{\psi}_{t}\rangle \,d\xi
_{k}\left( t\right) ^{\ast }  \label{eq:GS_equation}
\end{eqnarray}
where now
\begin{eqnarray}
c_{k}\left( t\right) =\langle \psi _{t}|L_{k}\,\psi _{t}\rangle ,
\label{eq:c}
\end{eqnarray}
and $\left\{ \xi _{k}\right\} $ are a family of independent \emph{complex} Wiener
processes: 
\begin{eqnarray}
d\xi _{j}\left( t\right) ^{\ast }\,d\xi _{k}\left( t\right)  &=&\delta
_{jk}dt, \notag \\
d\xi _{j}\left( t\right) \,d\xi _{k}\left( t\right)  &=&0.
\label{eq:xi}
\end{eqnarray}
The equations (\ref{eq:BS_equation}) and (\ref{eq:GS_equation}) are both stochastic analogues of the Schr\"{o}dinger equation, driven by classical noise. (Note that every complex Wiener process $\xi$ can be written as a sum, $ \frac{1}{\sqrt{2}} \big( B_1 (t)
+ i B_2(t) \big)$, of two independent standard Wiener processes. Both equations are nonlinear - on account of $\lambda(t)$'s in
(\ref{eq:BS_equation}) and the $c(t)$'s in (\ref{eq:GS_equation}).

The equations however have very different origins. The Belavkin-Schr\"{o}dinger equation
comes from conditioning the state of the system on the results of continuous indirect measurement \cite{Belavkin}, and is
best understood in terms of a quantum filtering theory \cite{BouGutMaa04}-\cite{BvHJ}, based on quantum stochastic calculus
\cite{HP84}-\cite{Par92}. It has also arisen independently as a quantum Monte Carlo technique \cite{CarBook93}-\cite{DCM}.
The Gisin-Percival-Schr\"{o}dinger equation however is postulated as a new law of Physics: it is proposed as the equation describing a quantum state diffusion leading to a collapse of the wavefunction even for \textit{closed} systems. It should be mentioned however, that (\ref{eq:BS_equation}) also occurs as a proposal for spontaneous collapse and localization \cite{GPR90}, again as a new
law of Physics.

The question we ask in this article is whether or not one can realize a particular  Gisin-Percival-Schr\"{o}dinger equation as some Belavkin-Schr\"{o}dinger equation. What we show is that for any fixed $\left( \mathbf{R},H\right) $, leading to a solution $\tilde{\psi} _t$ of the corresponding Gisin-Percival-Schr\"{o}dinger equation with initial state $\psi_0$, there will exist an $\left( \mathbf{L},H\right) $ such that (with appropriate identification of the noises)
\begin{eqnarray}
\vert \psi_t \rangle = e^{i \Theta (t)} \, \vert \tilde{\psi}_t \rangle 
\label{eq:main_result}
\end{eqnarray}
where $\psi _t$ is to be the solution of some $\left( \mathbf{L},H\right) $ Belavkin-Schr\"{o}dinger equation with the same
initial state $\psi_0$.

In particular, the wavefunctions are equivalent up to some c-number phase, $\Theta (t)$. However, this phase will have to be time-dependent. In fact, $\Theta (t)$ will have to be random, with a causal dependence on the noise up to time $t$, and in particular it must be a diffusion process with unbounded variation, that is
\begin{eqnarray}
d \Theta (t) \, d \Theta (t)  \neq 0.
\end{eqnarray}
In fact, we shall show in Proposition \ref{prop:bv} below that if $\Theta (t)$ is of bounded variation then (\ref{eq:main_result}) cannot hold.

In Section \ref{sec:representation}, we shall give explicit constructions to establish the representation (\ref{eq:main_result}).
After a discussion of the underlying network theory in Section \ref{sec:system}, we show in Section \ref{sec:simulation}that the Gisin-Percival Schr\"{o}dinger equation can in fact be modelled exactly as a Belavkin-Schr\"{o}dinger equation if we allow feedback.

\subsection{Relation to Other Work}
It is well known that the Belavkin-Schr\"{o}dinger equation has the linear form
\begin{eqnarray}
d \vert \chi_t \rangle = - \big( \frac{1}{2} \sum_k L^\ast_k L_k +i H \big)  \vert \chi_t \rangle \, dt 
+\sum_k L_k \vert \chi_t \rangle \, dY_k (t),
\label{eq:BZ}
\end{eqnarray}
where $Y_k (t)$ are the measured quadrature processes (usually from a), \cite{Belavkin}-\cite{BvHJ}.
The Wiener processes $I_k (t)$ are then the innovations processes given by
\begin{eqnarray}
dI_k (t) = dY_k (t) - \lambda_k (t)\, dt .
\end{eqnarray}
Underlying this, of course, is a quantum stochastic description for a definite input-system-output model. Effectively, one is homodyning the quantum output fields, with the $k$th channel interpreted as a continuous measurement of the observable $L_k +L^\ast_k$.
The linearity is no great surprise, and is the well known classically as the Zakai equation in filtering theory.

The fact that the $Y_k$ are \textit{not} Wiener processes (otherwise we would be conditioning on white noise) is often missed in the physics literature. One can consider the mathematical equation (\ref{eq:BZ}) with the $Y_k$ replaced by independent Wiener processes $Z_k$, but this is physically meaningless: it becomes physically correct only when one replaces the $Z_k$ with the processes having the same statistical distribution as the $Y_k$ and mathematically this can be done using a re-weighting of the path probabilities known as a Cameron-Martin-Girsanov transformation, see for instance \cite{Gatarek_Gisin}.

\subsubsection{Wiseman Milburn Limit}
The derivation by Wiseman and Milburn \cite{WM93} is based on an unequivocally clear physical physical model. The consider a system with single input field with coupling operator
\begin{eqnarray}
L = \sqrt{\gamma} e^{-i \varphi } e^{i \Omega t} \, a,
\end{eqnarray}
where $a$ is a fixed system operator (say the annihilator for a cavity mode) $\varphi$ a fixed phase and $\Omega $ is the detuning frequency from the local oscillator. Substituting into (\ref{eq:BZ}) we have
\begin{eqnarray*}
d \vert \chi_t \rangle = - \big( \frac{1}{2}  \gamma a^\ast a +i H \big)  \vert \chi_t \rangle \, dt + \sqrt{\gamma} e^{-i \varphi } e^{i \Omega t} \, a \vert \chi_t \rangle  \bigg( dI (t) + \sqrt{\gamma} \langle e^{-i \varphi } e^{i \Omega t} \, a + e^{i \varphi } e^{-i \Omega t} \, a^\ast \rangle dt\bigg) .
\label{eq:BZLO}
\end{eqnarray*}
They now take the limit $\Omega \to \infty$ and drop the rapidly oscillating terms to obtain
\begin{eqnarray}
d \vert \chi_t \rangle  \approx  - \big( \frac{1}{2}  \gamma a^\ast a +i H \big)  \vert \chi_t \rangle \, dt  + \sqrt{\gamma} a \vert \chi_t \rangle   \bigg( d\xi_\Omega (t)^\ast + \sqrt{\gamma} \langle  a^\ast \rangle dt\bigg) ,
\label{eq:BZLO_approx}
\end{eqnarray}
where they introduce the process $\xi_\Omega (t)$ defined by
\begin{eqnarray}
d\xi_\Omega (t)^\ast = e^{-i \varphi } e^{i \Omega t} \, dI(t).
\end{eqnarray}
One may argue that $ d\xi_\Omega (t)^\ast d\xi_\Omega (t) = \big( dI(t) \big)^2 =dt$ while $d\xi_\Omega (t)^2 = e^{2i \varphi } e^{-2i \Omega t} \, dt\approx 0$, so that in (\ref{eq:BZLO_approx})
we may replace $\xi_\Omega (t)$ with a limit complex Wiener process $\xi (t)$. The resulting equation then leads to the $(R=\sqrt{\gamma} a, H)$ Gisin-Percival equation for $\vert \chi_t \rangle / \| \chi_t \|$.
A precise derivation would presumably involve a Fourier analysis of the input (pre-measurement) process $Z(t) =
e^{-i \varphi } e^{i \Omega t}B(t) e^{i \varphi } e^{-i \Omega t} B(t)^\ast$ considered in \cite{BvH_ref}, and the use of the Riemann-Lebesgue Lemma \cite{Bochner} to justify the omission of rapidly oscillating terms, but this lies outside the scope of this paper.

\subsubsection{Girsanov Transformation}
We mention that a recent result by Parthasarathy and Usha Devi \cite{Partha_Usha} shows how to derive the Gisin-Percival equation from a quantum stochastic evolution using an appropriate Girsanov transformation. In fact, their scheme uses the same construction as that appearing in our simplest representation of the Gisin-Percival equations in terms of the Belavkin and the , see Subsection \ref{subsec:example}.

\subsection{Covariance}
\label{subec:Cov}
The equations (\ref{eq:BS_equation}) and (\ref{eq:GS_equation}) give rise to quantum dynamical semigroups: for instance,
taking the average over the ensemble of Wiener paths $\{ I_k \}$, we obtain
\begin{eqnarray*}
\mathbb{E} [ \langle \psi_t \vert X \, \psi_t \rangle ] \equiv \Phi_t (X)
\end{eqnarray*}
where $\Phi_t$ is the completely positive semigroup on the operators of $\mathfrak{h}$ with the GKS-Lindbladian associated with
 $  \left(  \mathbf{L},H\right)$ to be 
\begin{eqnarray}
\mathcal{L}_{\left(  \mathbf{L},H\right)}\left( X\right) =\frac{1}{2}\sum_{k}\bigg( \left[ L_{k}^{\ast
},\cdot \right] L_{k}+ L_{k}^{\ast }\left[ \cdot ,L_{k}%
\right] \bigg)-i\left[ \cdot ,H\right] . \notag \\
\end{eqnarray}
The corresponding expression involving $\tilde{\psi}_t$ averaged over the complex Wiener noise leads to the quantum dynamical semigroup with GKS-Lindbladian $  \left(  \mathbf{R},H\right)$.

Nevertheless, there are several differences. The GKS-Lindbladian $  \left(  \mathbf{L},H\right)$ is invariant under the transformations \cite{Decoherence}
\begin{eqnarray}
L_{k}\rightarrow L_{k}^{\prime }=\sum_{j}u_{kj}L_{k}  \label{eq:EU_rot}
\end{eqnarray}
with $U=\left[ u_{jk}\right] $ unitary, and the transformations 
\begin{eqnarray}
L_{k} &\rightarrow &L_{k}^{\prime }=L_{k}+\beta _{k},  \notag \\
H &\rightarrow &H^{\prime }=H+\sum_{k}\text{Im}\left\{ \beta _{k}^{\ast
}L_{k}\right\} +\epsilon , \label{eq:EU_transl}
\end{eqnarray}
with $\epsilon $ real.

The Gisin-Percival-Schr\"{o}dinger equation transforms covariantly under these transformations. For the unitary rotations
transformations (\ref{eq:EU_rot}) we need only rotate the complex Wiener processes, while under the translation transformations (\ref{eq:EU_transl}) we have 
\begin{eqnarray}
d|\tilde{\psi}_{t}^{\prime }\rangle =\left( -iH-\frac{1}{2}%
\sum_{k}\left( R_{k}^{\ast }R_{k}-2c_{k}^{\ast }R_{k}+|c_{k}|^{2}\right)
\right) |\tilde{\psi}_{t}^{\prime }\rangle \,dt 
-i\epsilon (t) |\tilde{\psi}_{t}^{\prime }\rangle \,dt   +\sum_{k}\left( R_{k}-c_{k}\right) |\tilde{\psi}_{t}^{\prime }\rangle
\,d\xi _{k}\left( t\right) ^{\ast },  \label{eq:GS_prime}
\end{eqnarray}
where the additional term $\epsilon (t)=\sum_{k}\text{Im}\left\{ \beta
_{k}^{\ast }c_{k}\right\} $ is a time-dependent phase.

The Belavkin-Schr\"{o}dinger equation, however, does not typically possess such a covariance.

\subsection{Notation}

We will write the Belavkin-Schr\"{o}dinger equation (\ref{eq:BS_equation}) in the form
\begin{eqnarray}
d|\psi _{t}\rangle  = dF_{ (\mathbf{L}, H)} (t) \, |\psi _{t}\rangle ,
\end{eqnarray}
with
\begin{eqnarray}
dF_{(\mathbf{L}, H)} (t)  =\sum_{k}\left( L_{k}-\frac{1}{2}\lambda _{k}\right)  
\,dI_{k}\left( t\right)+\left( -iH-\frac{1}{2}\sum_{k}\left( L_{k}^{\ast
}L_{k}-\lambda _{k}L_{k}+\frac{1}{4}\lambda _{k}^{2}\right) \right)   dt 
 .
\label{eq:F}
\end{eqnarray}

Similarly, we will write the Gisin-Percival-Schr\"{o}dinger equation (\ref{eq:GS_equation}) in the form
\begin{eqnarray}
d| \tilde{\psi }_{t}\rangle  = dM_{ (\mathbf{L}, H)} (t) \, |\tilde{\psi} _{t}\rangle ,
\end{eqnarray}
with
\begin{eqnarray}
 dM_{(\mathbf{R}, H)} (t)  =\sum_{k}\left( R_{k}-c_{k}\right)   \,d\xi
_{k}\left( t\right) ^{\ast }
+\bigg( -iH-\frac{1}{2}\sum_{k}\left(
R_{k}^{\ast }R_{k}-2c_{k}^{\ast }R_{k}+|c_{k}|^{2}\right) \bigg)  dt  
 .
\label{eq:M}
\end{eqnarray}

\section{Representation}
\label{sec:Rep}
For clarity we take the Gisin-Percival-Schr\"{o}dinger equation with just a single collapse operator $R$:
\begin{eqnarray}
d|\tilde{\psi}_{t}\rangle =\left( -iH-\frac{1}{2}\left( R^{\ast
}R-2c^{\ast }R+|c|^{2}\right) \right) |\tilde{\psi}_{t}\rangle \,dt +\left( R-c\right) |\tilde{\psi}_{t}\rangle \,d\xi \left( t\right) ^{\ast
}, 
\label{eq:GS_single_R}
\end{eqnarray}

Our question is whether there exists a choice of $\left( \mathbf{L},H'\right) $ for which the associated Belavkin-Schr\"{o}dinger equation reproduces the 
$\left( R,H\right) $ Gisin-Percival-Schr\"{o}dinger equation.

\begin{proposition}
\label{prop:bv}
There is no direct Belavkin-Schr\"{o}dinger equation which will reproduce the $\left(
R,H\right) $ Gisin-Percival-Schr\"{o}dinger equation, in the sense that 
\begin{eqnarray*}
\vert \psi (t) \rangle \equiv e^{i\theta (t)} \vert \tilde{\psi } (t) \rangle ,
\end{eqnarray*}
for some real valued differentiable process $\theta$.
\end{proposition}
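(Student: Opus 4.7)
The plan is to compute the Itô self-tensor quadratic variation $d|\psi_t\rangle\otimes d|\psi_t\rangle$ on each side of the proposed identity and exploit the mismatch between the real-Wiener Itô table $dI_jdI_k=\delta_{jk}\,dt$ of the Belavkin equation and the complex-Wiener table $d\xi^\ast d\xi^\ast = 0$ of the Gisin--Percival equation. Because $\theta$ is assumed differentiable, $d\theta$ is of bounded variation and satisfies $d\theta\cdot dW=0$ against any martingale $W$, so Itô's product rule gives
\[
d|\psi_t\rangle = i\dot\theta(t)\,e^{i\theta(t)}|\tilde\psi_t\rangle\,dt + e^{i\theta(t)}\,d|\tilde\psi_t\rangle ,
\]
and the martingale part of $d|\psi_t\rangle$ is simply $e^{i\theta(t)}$ times that of $d|\tilde\psi_t\rangle$.

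From the Belavkin martingale increment one computes
\[
d|\psi_t\rangle \otimes d|\psi_t\rangle = \sum_k \bigl(L_k - \tfrac{1}{2}\lambda_k\bigr)|\psi_t\rangle \otimes \bigl(L_k - \tfrac{1}{2}\lambda_k\bigr)|\psi_t\rangle \, dt ,
\]
while the Gisin--Percival side delivers $e^{2i\theta(t)}(R-c)|\tilde\psi_t\rangle\otimes(R-c)|\tilde\psi_t\rangle\,(d\xi^\ast)^2 = 0$. Equating forces
\[
\sum_k \bigl(L_k - \tfrac{1}{2}\lambda_k\bigr)|\psi_t\rangle \otimes \bigl(L_k - \tfrac{1}{2}\lambda_k\bigr)|\psi_t\rangle = 0
\]
identically in $t$. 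At the same time, the sesquilinear quadratic variation yields $\sum_k \bigl\|\bigl(L_k - \tfrac{1}{2}\lambda_k\bigr)\psi_t\bigr\|^2 = \|(R-c)\tilde\psi_t\|^2$, which is strictly positive whenever the Gisin--Percival diffusion is non-trivial.

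Since the Belavkin filtration must be contained in the two-dimensional filtration of $\xi$, the number of independent real drivers satisfies $n \le 2$; with the identification $d\xi^\ast = \tfrac{1}{\sqrt 2}(dI_1 - i\,dI_2)$, the martingale matching pins $(L_2 + iL_1)|\tilde\psi_t\rangle = \tfrac{1}{2}\bigl(\lambda_2(t) + i\lambda_1(t)\bigr)|\tilde\psi_t\rangle$ for all $t$, so $|\tilde\psi_t\rangle$ would have to remain an eigenvector of the fixed operator $K := L_2 + iL_1$ throughout its stochastic evolution. The hard part is to rule out this eigenvector persistence: I would close the argument by Itô-differentiating $K|\tilde\psi_t\rangle = \alpha(t)|\tilde\psi_t\rangle$ along the Gisin--Percival equation and matching the coefficient of $d\xi^\ast$, which yields $[K, R]|\tilde\psi_t\rangle = 0$ and, iterating with higher nested commutators in $R$, ultimately forces $(R-c)|\tilde\psi_t\rangle = 0$, contradicting the assumed non-triviality of the Gisin--Percival noise.
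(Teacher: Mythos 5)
Your set-up is sound and coincides with the paper's: differentiability of $\theta$ kills $d\theta\cdot dW$ against any martingale, so the martingale parts must match, and the engine of the impossibility is indeed the clash between $dI_j\,dI_k=\delta_{jk}\,dt$ and $(d\xi^\ast)^2=0$. But the two quadratic-variation identities you extract do not contradict each other. Writing $v_k:=\bigl(L_k-\frac12\lambda_k\bigr)|\psi_t\rangle$, the pair of conditions $\sum_k v_k\otimes v_k=0$ and $\sum_k\|v_k\|^2>0$ is perfectly satisfiable (take $v_2=\pm i\,v_1\neq 0$ --- which is exactly what the martingale matching $v_1=\frac{1}{\sqrt2}(R-c)|\psi_t\rangle$, $v_2=\mp\frac{i}{\sqrt2}(R-c)|\psi_t\rangle$ produces, so the self-tensor condition is implied by the matching and carries no independent information). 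The entire burden of the proof therefore falls on your final paragraph, and there the argument is both unfinished and flawed. Unfinished, because you only describe how you ``would'' close it. Flawed, because It\^{o}-differentiating $K|\tilde\psi_t\rangle=\alpha(t)|\tilde\psi_t\rangle$ and matching $d\xi^\ast$-coefficients yields $[K,R]|\tilde\psi_t\rangle=\mu(t)|\tilde\psi_t\rangle$, where $\mu$ is the unknown $d\xi^\ast$-coefficient of $d\alpha$, not $[K,R]|\tilde\psi_t\rangle=0$; the iteration then produces an open-ended hierarchy of eigen-relations that need not collapse to $(R-c)|\tilde\psi_t\rangle=0$ (for instance, if every $L_k$ is proportional to $R$ then $[K,R]=0$ and the whole hierarchy is vacuous, while the zeroth-level matching only forces $c(t)$ to be real, not zero). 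There is also the unaddressed generality issue: a candidate $(\mathbf{L},H')$ has arbitrary $n$ and an arbitrary linear identification of the $I_k$ with $\xi$, so your reduction to $n\le 2$ with $d\xi^\ast=\frac{1}{\sqrt2}(dI_1-i\,dI_2)$ needs the martingale representation theorem plus a rotation argument before the rest can even start.

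The paper closes the argument entirely from the $(d\xi^\ast)^2=0$ constraint, with no commutator hierarchy. Squaring the noise identity gives $\sum_k\bigl(L_k-\frac12\lambda_k(t)\bigr)^2=0$; since the scalars $\lambda_k(t)$ vary with time and path, this splits into the homogeneous pieces $\sum_kL_k^2=0$ and $\sum_k\lambda_kL_k=\frac14\sum_k\lambda_k^2$. Adding the adjoint of the second relation and taking its expectation in $\psi_t$ (so that $\langle L_k+L_k^\ast\rangle=\lambda_k$) gives $\sum_k\lambda_k^2=\frac12\sum_k\lambda_k^2$, forcing $\lambda_k\equiv0$ and hence, via the drift comparison $\frac14\sum_k\lambda_k^2=|c|^2$, $c\equiv0$ --- the desired contradiction. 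If you want to rescue your route, the missing information lives in the $dt$ (drift) matching and in the $\lambda$-dependence of the squared noise identity, neither of which your proposal exploits.
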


\begin{proof}
Let us suppose that there is a $(\mathbf{L},H')$ Belavkin-Schr\"{o}dinger equation (\ref{eq:BS_equation}) reproducing (\ref{eq:GS_single_R}), up to some ignorable phase term. That is,
\begin{eqnarray*}
dF_{ (\mathbf{L}, H')} (t) =d M_{(R,H )} (t) +i \dot{\theta} (t) \ dt.
\end{eqnarray*}
Looking at the skew-adjoint time-independent terms in the $dt$
coefficients, we see that we need to have the same Hamiltonian, $H'=H$ and $\theta =0$.
Moreover, we then have 
\begin{eqnarray*}
\sum_{k}\text{Re}\left( L_{k}^{\ast }L_{k}-\lambda _{k}\left( t\right) L_{k}+%
\frac{1}{4}\lambda _{k}\left( t\right) ^{2}\right)  \equiv \text{Re}\left(
R^{\ast }R-2c\left( t\right) ^{\ast }R+|c\left( t\right) |^{2}\right) 
\end{eqnarray*}
and comparing the time-independent parts we see that 
\begin{eqnarray}
\sum_{k}L_{k}^{\ast }L_{k}=R^{\ast }R.  \label{eq:cond0}
\end{eqnarray}
In addition, we have 
\begin{eqnarray*}
\sum_{k}\left( L_{k}+L_{k}^{\ast }\right) \lambda _{k}\left( t\right) -\frac{%
1}{2}\sum_{k}\lambda _{k}\left( t\right) ^{2}=2c\left( t\right) ^{\ast
}R+2c\left( t\right) R^{\ast }-2\left| c\left( t\right) \right| ^{2}
\end{eqnarray*}
and averaging gives 
\begin{eqnarray}
\frac{1}{4}\sum_{k}\lambda _{k}\left( t\right)
^{2}=\left| c\left( t\right) \right| ^{2}.
\label{eq:problem1}
\end{eqnarray}
Subtracting this last part off,
we obtain 
\begin{eqnarray}
\sum_{k}\left( L_{k}+L_{k}^{\ast }\right) \lambda _{k}\left( t\right)
=2c\left( t\right) ^{\ast }R+2c\left( t\right) R^{\ast }.
\label{eq:problem2}
\end{eqnarray}
Now, equating the noise terms leads to 
\begin{eqnarray}
\sum_{k}\left( L_{k}-\frac{1}{2}\lambda _{k}\left( t\right) \right)
\,dI_{k}\left( t\right) \equiv \left( R-c\left( t\right) \right) \,d\xi
\left( t\right) ^{\ast }  \label{eq:cond_complex}
\end{eqnarray}
and from the requirement that the $I_{k}$ are independent canonical Wiener
processes and that $\left( d\xi ^{\ast }\right) ^{2}=0$ we get that 
\begin{eqnarray*}
0 =\sum_{k}\left( L_{k}-\frac{1}{2}\lambda _{k}\left( t\right) \right)
^{2} = \sum_{k}(L_{k}^{2}-\lambda _{k}\left( t\right) L_{k}+\frac{1}{4}\lambda
_{k}\left( t\right) ^{2}).
\end{eqnarray*}
As this should be true for time-dependent $\lambda _{k}\left( t\right) $ we
require that 
\begin{eqnarray}
\sum_{k}L_{k}^{2}=0
\label{eq:zero}
\end{eqnarray}
and therefore 
\begin{eqnarray*}
\sum_{k}\lambda _{k}\left( t\right) L_{k}=\frac{1}{4}\sum_{k}\lambda
_{k}\left( t\right) ^{2}.
\end{eqnarray*}
We must then have $\sum_{k}\lambda _{k}\left( t\right) \left(
L_{k}+L_{k}^{\ast }\right) =\frac{1}{2}\sum_{k}\lambda _{k}\left( t\right)
^{2}$. The average of this last equation for vector state $\psi _{t}$ yields 
$\sum_{k}\lambda _{k}\left( t\right) ^{2}=\frac{1}{2}\sum_{k}\lambda
_{k}\left( t\right) ^{2}$ which is a contradiction.
\end{proof}

\subsection{Representation up to a Stochastic Phase}
\label{sec:representation}
We may however try and find a relation such as
\begin{eqnarray}
\vert \psi (t) \rangle \equiv e^{i\Theta (t)} \vert \tilde{\psi } (t) \rangle
\end{eqnarray}
where $\Theta ( t ) ^{\ast }=\Theta ( t ) $ is a self-adjoint stochastic process of unbounded variation.
This now leads to
\begin{eqnarray}
dF_{ (\mathbf{L}, H')} (t) =d M_{(R,H )} (t) +i d\Theta (t) +i d\Theta (t)  \ dM_{(R,H )} (t) .\notag \\
\label{eq:F=M+Theta}
\end{eqnarray}
We may now relax condition (\ref{eq:cond_complex}), and replace it by
\begin{eqnarray}
\sum_{k}\left( L_{k}-\frac{1}{2}\lambda _{k}\left( t\right) \right)
\,dI_{k}\left( t\right)  \equiv \left( R-c\left( t\right) \right) \,d\xi
\left( t\right) ^{\ast }+id\Theta\left( t\right)  . \label{eq:cond_relax}
\end{eqnarray}
This then implies
\begin{eqnarray}
\sum_{k}\left( L_{k}+L_{k}^{\ast }-\lambda _{k}\left( t\right) \right)
\,dI_{k}\left( t\right)  \equiv \left( R-c\left( t\right) \right) \,d\xi
\left( t\right) ^{\ast }+\left( R^{\ast }-c\left( t\right) ^{\ast }\right)
\,d\xi \left( t\right) .  \label{eq:cond_real}
\end{eqnarray}
We will now show that for an elementary class, which we called the Canonical class, it is possible to find
a stochastic phase $\Theta$ so that this is achieved.

For further discussion about the gauge invariance of stochastic master equations when the wavefunction is multiplied by a stochastic phase, see Wiseman and Milburn \cite{WM09}, page 170.
\subsection{The Canonical Class}

Let us suppose that each collapse operator $L_{k}$ is proportional to $R$: 
\begin{eqnarray}
L_{k}\equiv z_{k}R,  \label{eq:cond1}
\end{eqnarray}
then (\ref{eq:cond0}) is satisfied if 
\begin{eqnarray}
\sum_{k}\left| z_{k}\right| ^{2}=1.  \label{eq:cond2}
\end{eqnarray}

We now assume the relaxed condition (\ref{eq:cond_relax}), and from (\ref{eq:cond_real}) we get
\begin{eqnarray*}
\sum_{k}\left( z_{k}R+z_{k}^{\ast }R^{\ast }-\lambda _{k}\left( t\right)
\right) \,dI_{k}\left( t\right) \equiv \left( R-c\left( t\right) \right)
\,d\xi \left( t\right) ^{\ast }+\left( R^{\ast }-c\left( t\right) ^{\ast
}\right) \,d\xi \left( t\right) 
\end{eqnarray*}
where we now have $\lambda _{k}\left( t\right) =z_{k}c\left( t\right)
+z_{k}^{\ast }c\left( t\right) ^{\ast }$. The time-independent coefficients
may be equated and this leads to the identity  $\sum_{k}\left( z_{k}R+z_{k}^{\ast }R^{\ast }\right)
\,dI_{k}\left( t\right) \equiv R\,d\xi \left( t\right) ^{\ast }+R^{\ast
}\,d\xi \left( t\right) $ and assuming that $R\neq R^{\ast }$ one sees that
one should take
\begin{eqnarray}
\xi \left( t\right) ^{\ast }\equiv \sum_{k}z_{k}\,I_{k}\left( t\right) .
\label{eq:can_Noise}
\end{eqnarray}
One easily checks that the rest of (\ref{eq:cond_real}) then follows: that
is, $\sum_{k}\lambda _{k}\left( t\right) \,dI_{k}\left( t\right) \equiv
c\left( t\right) \,d\xi \left( t\right) ^{\ast }+c\left( t\right) ^{\ast
}\,d\xi \left( t\right) $. We also see that (\ref{eq:cond2}) implies that $%
d\xi .d\xi ^{\ast }=dt$, and for the identification (\ref{eq:can_Noise}) we
see that the other condition $\left( d\xi ^{\ast }\right) ^{2}=0$ requires
\begin{eqnarray}
\sum_{k}z_{k}^{2}=0.  \label{eq:cond2a}
\end{eqnarray}
The skew adjoint part of (\ref{eq:cond_relax}) is
\begin{eqnarray*}
\sum_{k}\left( z_{k}R-z_{k}^{\ast }R^{\ast }\right)
\,dI_{k}\left( t\right) \equiv \left( R-c\left( t\right) \right)
\,d\xi \left( t\right) ^{\ast }-\left( R^{\ast }-c\left( t\right)
^{\ast }\right) \,d\xi \left( t\right) +2id\Theta\left( t\right) 
\end{eqnarray*}
and, eliminating terms using (\ref{eq:can_Noise}), we find
\begin{eqnarray*}
d\Theta\left( t\right)  =\frac{1}{2i}\left( c\left( t\right) d\xi \left(
t\right) ^{\ast }-c\left( t\right) ^{\ast }d\xi \left( t\right) \right)  
\equiv \sum_{k}\text{Im}\{z_{k}c(t)\}\,dI_{k}\left( t\right) .
\end{eqnarray*}
We note that for the canonical class the phase $\Theta$ is proportional to the identity operator on $\mathfrak{h}$,
so we may think of it as a (stochastic) phase function.

This stochastic ``phase'' $\Theta$ has the nontrivial Ito table
\begin{eqnarray*}
d\Theta \, d\xi ^{\ast } &=&d\xi ^{\ast }\, d\Theta=-\frac{1}{2i}c\left( t\right) dt, \\
d\Theta \, d\xi  &=&d\xi \, d\Theta=\frac{1}{2i}c\left( t\right) ^{\ast }dt, \\
d\Theta \, d\Theta &=&\frac{1}{2}\left| c\left( t\right) \right| ^{2}dt.
\end{eqnarray*}

Now let us check that this gives the correct answer, specifically, that we obtain the correct $dt$ terms.
Starting from (\ref{eq:F}) for the canonical class, we note (from the conditions $\sum_{k} |z_{k}|^{2}=1, \sum_{k}z_{k}^{2}=0$) that
\begin{eqnarray*}
\sum_k L^\ast_k L_k &=& R^\ast R, \notag \\
\sum_k L^\ast_k \lambda_k &=& \sum_k (z_k c +z_k^\ast c^\ast ) z_k R = c^\ast R ,\notag \\
\sum_k  \lambda_k^2 &=& \sum_k (z_k c +z_k^\ast c^\ast )^2 = 2 |c|^2, \notag \\
\sum_k (L_k - \frac{1}{2} \lambda_k) I_k  &=& \sum_k (z_k R - \frac{1}{2} c z_k - \frac{1}{2} +z_k^\ast c^\ast ) I_k 
=
(R- \frac{1}{2} c ) d \xi^\ast - \frac{1}{2} c^\ast s \xi
.
\end{eqnarray*}
So for the canonical class we have
\begin{eqnarray*}
dF_{ (\mathbf{L}, H)} = \big( -iH - \frac{1}{2}(R^\ast R \underline{- c^\ast R + \frac{1}{2} |c|^2 }) \big) \, dt  + (R- \frac{1}{2} c) \, d \xi^\ast  - \frac{1}{2} c^\ast \, d\xi.
\end{eqnarray*}
The underlined terms are half what they should be in the $dt$ part of $dM_{(R,H)}$ - this was exactly the problem we ran into in Proposition
\ref{prop:bv}. However, let us look at the effect of the stochastic phase $\Theta$. The right hand side of (\ref{eq:F=M+Theta}) is
\begin{eqnarray*}
d M_{(R,H )} (t) +i d\Theta (t) +i d\Theta (t)  \ dM_{(R,H )} (t)   &=&    \bigg( -iH - \frac{1}{2}(R^\ast R  - 2c^\ast R +   |c|^2 ) \bigg) \, dt \\
&&+(R-c) \, d \xi^\ast  +  \frac{1}{2} \big( c \, d\xi^\ast - c^\ast \, d\xi \big)    +  \frac{1}{2} \big( c \, d\xi^\ast - c^\ast \, d\xi \big)(R-c) \, d \xi^\ast   ,
\end{eqnarray*}
and the Ito correction $i d\Theta (t) dM_{(R,H )} (t) \equiv - \frac{1}{2}   c^\ast    (R-c) \, dt$ gives precisely the missing
$dt$ term contribution. By inspection, we see that we have recovered (\ref{eq:F=M+Theta}).

\begin{remark}
For $n=2$, the equations $\sum_k \vert z_k \vert^2 =1$ and $\sum_k z_k^2 =0$ have solution
\begin{eqnarray}
z_1 = \frac{1}{\sqrt{2}}e^{i \phi} , \quad z_1 = \pm i \frac{1}{\sqrt{2}}e^{i \phi},
\end{eqnarray}
for some phase $\phi$.
For each integer $n \ge 3$, we have a larger set of solutions, but a subclass is given by $\{ z_1 , \cdots , z_n \}$ where $z_k = \frac{1}{\sqrt{n}} e^{i\pi (k -1)/n}$, for $k =1, \cdots , n$.
\end{remark}

\subsection{Simplest Example}
\label{subsec:example}
From the remark, we see that the simplest realization (up to a phase) of the canonical class is given by taking $n=2$
collapse operators 
\begin{eqnarray}
L_{1}=\frac{1}{\sqrt{2}}R,\quad L_{2}=\frac{i}{\sqrt{2}}R.  \label{eq:Can_L}
\end{eqnarray}
Here we will have 
\begin{eqnarray*}
\lambda _{1}\left( t\right) &=&\frac{1}{\sqrt{2}}\langle \psi _{t}|\left(
R+R^{\ast }\right) \psi _{t}\rangle , \\
\lambda _{2}\left( t\right) &=&\frac{i}{\sqrt{2}}\langle \psi _{t}|\left(
R-R^{\ast }\right) \psi _{t}\rangle ,
\end{eqnarray*}
so that 
\begin{eqnarray*}
\langle \psi _{t}|R\psi _{t}\rangle =\frac{1}{\sqrt{2}}\left( \lambda
_{1}\left( t\right) -i\lambda _{2}\left( t\right) \right) ,
\end{eqnarray*}
and the complex Wiener process is 
\begin{eqnarray}
\xi \left( t\right) ^{\ast }=\frac{1}{\sqrt{2}}I_{1}\left( t\right) +\frac{i%
}{\sqrt{2}}I_{2}\left( t\right) .  \label{eq:Can_Xi}
\end{eqnarray}

\subsection{The Filters}

Let $X$ be an arbitrary system operator, then its filtered expectation at time $t$ from the Belavkin theory is 
\begin{eqnarray}
\pi _{t}\left( X\right) =\langle \psi _{t}|X\,\psi _{t}\rangle
\label{eq:B_estimate}
\end{eqnarray}
and from (\ref{eq:BS_equation}) we get 
\begin{eqnarray}
d\pi _{t}\left( X\right) =\pi _{t}\left( \mathcal{L}_{\left( \mathbf{L}%
,H\right) }X\right) dt +\sum_{k}\left\{ \pi _{t}\left( XL_{k}+L_{k}^{\ast
}X\right) -\lambda _{k}\left( t\right) \pi _{t}\left( X\right) \right\}
dI_{k}\left( t\right) ,  \label{eq:B_filter}
\end{eqnarray}
and we recall that $\lambda _{k}\left( t\right) $  now equals $\pi _{t}\left( L_{k}+L_{k}^{\ast }\right) $. Here the Lindbladian is the one determined by collapse operators $\mathbf{L}=\left\{ L_{k}\right\} $ and Hamiltonian $H$.

Although not interpreted as a filter, we may consider the equivalent in Gisin-Percival's theory 
\begin{eqnarray}
\tilde{\pi}_{t}\left( X\right) =\langle \tilde{\psi}_{t}|X\,\tilde{\psi}%
_{t}\rangle .  \label{eq:G_estimate}
\end{eqnarray}
This time, using (\ref{eq:GS_equation}) we find 
\begin{eqnarray}
d\tilde{\pi}_{t}\left( X\right) &=&\tilde{\pi}_{t}\left( \mathcal{L}_{\left( 
\mathbf{R},H\right) }X\right) dt +\sum_{k}\left\{ \tilde{\pi}_{t}\left(
XR_{k}\right) -c_{k}\left( t\right) \tilde{\pi}_{t}\left( X\right) \right\}
\,d\xi _{k}\left( t\right) ^{\ast }  \notag \\
&&+\sum_{k}\left\{ \tilde{\pi}_{t}\left( R_{k}^{\ast }X\right) -c_{k}\left(
t\right) ^{\ast }\tilde{\pi}_{t}\left( X\right) \right\} \,d\xi _{k}\left(
t\right) . \notag \\
 \label{eq:G_filter}
\end{eqnarray}

As the Belavkin-Schr\"{o}dinger wave function $| \psi _t \rangle$ in the canonical class is equal to the Gisin-Percival-Schr\"{o}dinger wavefunction up to a phase, the following result should not be surprising.

\begin{proposition}
\label{prop:bv2}
The Belavkin filter $\pi _{t}\left( X\right) $ corresponding to the canonical class $\left( L_{1}=\frac{1}{\sqrt{2}}R,L_{2}=\frac{i}{\sqrt{2}}R,H\right) $ Belavkin-Schr\"{o}dinger model is identical to the Gisin-Percival ``filter'' $\tilde{\pi}_{t}\left( X\right) $
for the\thinspace $\left( R,H\right) $ Gisin-Percival-Schr\"{o}dinger equation.
\end{proposition}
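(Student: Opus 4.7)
The plan is to exploit directly the representation established in Section \ref{sec:representation}. There we saw that for the canonical class $(L_1=\tfrac{1}{\sqrt 2}R,\ L_2=\tfrac{i}{\sqrt 2}R,\ H)$, with the noise identification $\xi(t)^\ast=\tfrac{1}{\sqrt 2}I_1(t)+\tfrac{i}{\sqrt 2}I_2(t)$ from (\ref{eq:Can_Xi}), the two wavefunctions are related by $|\psi_t\rangle=e^{i\Theta(t)}|\tilde\psi_t\rangle$, where $\Theta(t)$ is a real-valued c-number stochastic process proportional to the identity operator on $\mathfrak{h}$.

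My first step is simply to feed this relation into the definition (\ref{eq:B_estimate}) of the Belavkin filter. Because $\Theta(t)$ is real scalar, the factor $e^{i\Theta(t)}$ is unitary and commutes with every system operator $X$, so
\begin{eqnarray*}
\pi_t(X)=\langle\psi_t|X\psi_t\rangle
=\langle\tilde\psi_t|\,e^{-i\Theta(t)}X\,e^{i\Theta(t)}\,\tilde\psi_t\rangle
=\langle\tilde\psi_t|X\tilde\psi_t\rangle=\tilde\pi_t(X),
\end{eqnarray*}
which is the claim. The algebraic cancellation is the whole content: one does not need to worry about any It\^o correction, even though $\Theta$ has nontrivial quadratic variation, because $e^{i\Theta}$ and $e^{-i\Theta}$ sit on opposite sides of $X$ and multiply to the identity pathwise.

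As an independent sanity check I would verify directly, using (\ref{eq:B_filter}) and (\ref{eq:G_filter}), that both $\pi_t$ and $\tilde\pi_t$ satisfy the same SDE driven by $(I_1,I_2)$, starting from the common initial condition $\pi_0(X)=\tilde\pi_0(X)=\langle\psi_0|X\psi_0\rangle$. For the drift one checks that $\mathcal{L}_{(\mathbf{L},H)}=\mathcal{L}_{(R,H)}$: the identities $\sum_k|z_k|^2=1$ and $\sum_k z_k^2=0$ already verified in the canonical class give $\sum_k L_k^\ast L_k=R^\ast R$ and $\sum_k L_k^\ast XL_k=R^\ast XR$. For the fluctuation part one substitutes $d\xi^\ast=\tfrac{1}{\sqrt 2}(dI_1+i\,dI_2)$ into the two $d\xi^\ast,d\xi$ terms of (\ref{eq:G_filter}); regrouping by $dI_1$ and $dI_2$, and using $c=\tilde\pi_t(R)$ together with $\lambda_1=\tfrac{1}{\sqrt 2}\tilde\pi_t(R+R^\ast)$, $\lambda_2=\tfrac{i}{\sqrt 2}\tilde\pi_t(R-R^\ast)$, reproduces the two innovation terms of (\ref{eq:B_filter}) exactly.

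There is no real obstacle here; the proposition is essentially a corollary of the wavefunction representation in Section \ref{sec:representation}. The only conceptual point worth emphasising is the one mentioned above: the stochastic phase $\Theta(t)$ is manifestly present in the wavefunctions but drops out of all quadratic expressions $\langle\psi_t|X\psi_t\rangle$, so that the nonlinear \emph{filter} is insensitive to it even though the linear \emph{Schr\"odinger} evolutions differ by a genuinely stochastic factor.
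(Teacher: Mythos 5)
Your proof is correct, but your primary argument takes a genuinely different route from the paper's. The paper proves the proposition exactly the way you describe in your ``sanity check'': it substitutes $L_1=\frac{1}{\sqrt{2}}R$, $L_2=\frac{i}{\sqrt{2}}R$ into the Belavkin filter equation (\ref{eq:B_filter}) and reassembles the two real innovation terms into the $d\xi^\ast$, $d\xi$ terms of (\ref{eq:G_filter}) via the identification (\ref{eq:Can_Xi}), concluding $\pi_t\equiv\tilde\pi_t$ from the common initial condition. Your leading argument instead treats the proposition as an immediate corollary of the representation $|\psi_t\rangle=e^{i\Theta(t)}|\tilde\psi_t\rangle$: since $\Theta$ is a real scalar, the phase cancels pathwise in the quadratic form, with no It\^o correction to worry about. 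The paper explicitly gestures at this (``the following result should not be surprising'') but does not use it as the proof. What each approach buys: yours is shorter and makes transparent \emph{why} the stochastic phase is invisible at the level of expectations even though it has nontrivial quadratic variation; the paper's direct substitution is self-contained at the filter level and does not lean on the Section \ref{sec:representation} representation --- which matters slightly, because that representation was itself derived using the identification $\lambda_k=z_kc+z_k^\ast c^\ast$, i.e.\ tacitly assuming $\langle\psi_t|R\,\psi_t\rangle=\langle\tilde\psi_t|R\,\tilde\psi_t\rangle$, a special case of the very statement being proved. This is a consistent-ansatz-plus-uniqueness argument rather than a vicious circle, but the direct SDE computation avoids even the appearance of circularity, which is presumably why the paper chose it. Since you supply that computation as well, your proof stands on its own.
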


The proof is routine and simply amounts to substituting (\ref{eq:Can_L}) into (\ref{eq:B_filter}) and reassembling the components using the definition (\ref{eq:Can_Xi}) for the complex Wiener process. The result is just the single collapse operator $R$ version of (\ref{eq:GS_equation}). If both are initialized on the same state $\psi _{0}$ then we must have $\pi _{t}\equiv \tilde{\pi}_{t}$. The extension to the multi-dimensional situation is obvious.

\section{Systems Theory Approach}
\label{sec:system}

We recall the ``SLH'' formulation of open quantum Markov systems. Our
system, with Hilbert space $\mathfrak{h}$, is coupled to an environment which is
a Bose reservoir with Fock space $\mathfrak{F}$. For an $n$ input model, we take 
$\mathfrak{F}$ to be the Bose Fock space with one particle space $\mathbb{C}%
^{n}\otimes L^{2}[0,\infty )$. Let $\left\{ e_{1},\cdots ,e_{n}\right\} $ be
an orthonormal basis for $\mathbb{C}^{n}$, then $e_{k}\otimes f$ gives a one
particle state corresponding to an input quantum in the $k$th channel with
wave function $f=f\left( t\right) $, $t\geq 0$. (We may think of the
reservoir quanta traveling through the system, and the parameter $t$ labels
the part of the input field passing through the system at time $t$). We take 
$B_{k}\left( t\right) $ to be the annihilation operator for the one-particle
vector $e_{k}\otimes 1_{\left[ 0,t\right] }$ - that is, annihilating a
reservoir quantum of type $k$ sometime over the interval $\left[ 0,t\right] $%
. Along with the creators, we have the canonical commutation relations
\begin{equation*}
\left[ B_{j}\left( t\right) ,B_{k}\left( s\right) ^{\ast }\right] =\delta
_{jk}\,\text{max}\left\{ t,s\right\} .
\end{equation*}
Formally, we may introduce densities $b_{k}\left( t\right) $ interpreted as
annihilators of a reservoir quantum of type $k$ at time $t$, with $\left[
b_{j}\left( t\right) ,b_{k}\left( s\right) ^{\ast }\right] =\delta
_{jk}\delta \left( t-s\right) $. Then $B_{j}\left( t\right) \equiv
\int_{0}^{t}b_{j}\left( s\right) ds$, etc. We may furthermore introduce the
operators
\begin{equation*}
\Lambda _{jk}\left( t\right) \equiv \int_{0}^{t}b_{j}\left( s\right) ^{\ast
}b_{k}\left( s\right) ds
\end{equation*}
describing the instantaneous scattering from channel $k$ to channel $j$ of
the reservoir quanta at some time $s\in \left[ 0,t\right] $.

It is well known that the most general form of a unitary adapted process on $%
\mathfrak{h}\otimes \mathfrak{F}$ from a quantum stochastic differential equation
with constant coefficients is $U\left( t\right) $ given by
\begin{equation*}
dU\left( t\right) =dG\left( t\right) \,U\left( t\right) ,\quad U\left(
0\right) =I,
\end{equation*}
where
\begin{eqnarray}
dG\left( t\right)  =-\left( iH+\frac{1}{2}\sum_{k}L_{k}^{\ast
}L_{k}\right) \otimes dt  +\sum_{k}L_{k}\otimes dB_{k}\left( t\right) ^{\ast } 
-\sum_{jk}L_{j}^{\ast }S_{jk}\otimes dB_{k}\left(
t\right)
+\sum_{jk}\left(
S_{jk}-\delta _{jk}\right) \otimes d\Lambda _{jk}\left( t\right) 
\end{eqnarray}
where the coefficients $S_{jk},L_{j},H$ are operators on $\mathfrak{h}$ with $S=%
\left[ S_{jk}\right] $ unitary and $H=H^{\ast }$. We write $G\sim \left( S,%
\mathbf{L},H\right) $ for the coefficients, and refer to them as the
Hudson-Parthasarathy parameters.

We set $j_{t}\left( X\right) =U\left( t\right) ^{\ast }\left[ X\otimes I%
\right] U\left( t\right) $ for a system operator $X$, and we have
\begin{eqnarray*}
dj_{t}\left( X\right)  &=&j_{t}\left( \mathcal{L}X\right) \otimes
dt +\sum_{jk}j_{t}\left( \left[ L_{j}^{\ast },X\right] S_{jk}\right) \otimes
dB_{k}\left( t\right)  \\
&&+\sum_{jk}j_{t}\left( S_{jk}^{\ast }\left[ X,L_{k}\right] \right) \otimes
dB_{j}\left( t\right) ^{\ast } +\sum_{jk}\left( \sum_{l}S_{lj}^{\ast }XS_{lk}-\delta _{jk}X\right)
\otimes d\Lambda _{jk}\left( t\right) 
\end{eqnarray*}
where $\mathcal{L}X=\frac{1}{2}\sum_{k}\left[
L_{k}^{\ast },X\right] L_{k}+\frac{1}{2}\sum_{k}L_{k}^{\ast }\left[ X,L_{k}%
\right] -i\left[ X,H\right] $ is a GKS-Lindblad generator.

The output fields are
\begin{equation*}
B_{\text{out},k}\left( t\right) =U\left( t\right) ^{\ast }\left[ I\otimes
B_{k}\left( t\right) \right] U\left( t\right) 
\end{equation*}
and so $dB_{\text{out},j}\left( t\right) =\sum_{k}j_{j}\left(
S_{jk}\right) \otimes dB_{k}\left( t\right) +j_{t}\left( L_{k}\right)
\otimes dt.$

\subsection{Filtering}

Suppose we wish to monitor the output quadratures
\begin{equation*}
Y_{k}\left( t\right) =B_{\text{out},k}\left( t\right) +B_{\text{out}%
,k}\left( t\right) ^{\ast }.
\end{equation*}
They form a commuting set of observables. We have that $Y_{k}\left( t\right)
\equiv U\left( t\right) ^{\ast }\left[ I\otimes Z_{k}\left( t\right) \right]
U\left( t\right) $ where $Z_{k}\left( t\right) =B_{k}\left( t\right)
+B_{k}\left( t\right) ^{\ast }$: they are a commuting set of observables on
the Fock space, having the distribution of independent Wiener processes for
the Fock vacuum state.

The aim of filtering is to compute
\begin{equation*}
\pi _{t}\left( X\right) =\mathbb{E}\left[ j_{t}\left( X\right) |\mathfrak{Y}_{t}%
\right] 
\end{equation*}
which is the conditional expectation of $j_{t}\left( X\right) $ onto the
(commutative) algebra $\mathfrak{Y}_{t}$ generated by the measured observables $%
\left\{ Y_{k}\left( s\right) :k,0\geq s\leq s\right\} $. 

We restrict our attention to the non-scattering case $S=I_{n}$, that is $%
G\sim \left( I_{n},\mathbf{L},H\right) $ and
\begin{eqnarray*}
dG\left( t\right)  =-\left( iH+\frac{1}{2}\sum_{k}L_{k}^{\ast
}L_{k}\right) \otimes dt -\sum_{k}L_{k}^{\ast }\otimes dB_{k}\left( t\right) +\sum_{k}L_{k}\otimes
dB_{k}\left( t\right) ^{\ast },
\end{eqnarray*}
so that
\begin{equation*}
dY_{k}\left( t\right) =I\otimes dZ_{k}\left( t\right) +j_{t}\left(
L_{k}+L_{k}^{\ast }\right) \otimes dt.
\end{equation*}
In this case, the filter is then given by
\begin{equation*}
\pi _{t}\left( X\right) =\langle \psi _{t}|X\,\psi _{t}\rangle 
\end{equation*}
where $\psi _{t}$ is the solution to the Belavkin-Schr\"{o}dinger equation
(\ref{eq:BS_equation}). The processes $I_{k}\left( t\right) $ are the innovations:
\begin{eqnarray*}
dI_{k}\left( t\right)  =dY_{k}-\pi _{t}\left( L_{k}+L_{k}^{\ast }\right) dt
=I\otimes dZ_{k}\left( t\right)  +\left[ j_{t}\left( L_{k}+L_{k}^{\ast }\right) -\pi _{t}\left(
L_{k}+L_{k}^{\ast }\right) \right] \otimes dt.
\end{eqnarray*}
It is easy to see that the innovations form a multidimensional Wiener
process.

\subsection{The Series Product}

We now consider the situation where the output of one system, $G_{1}\sim
\left( S_{1},\mathbf{L}_{1},H_{1}\right) $ is fed in as input to another, $%
G_{2}\sim \left( S_{2},\mathbf{L}_{2},H_{2}\right) $. In the limit of
instantaneous feedforward, we find the combined model $G_2 \vartriangleleft G_1$, where the series product is defined by 
\begin{eqnarray*}
\left( S_{2},\mathbf{L}_{2},H_{2}\right) \vartriangleleft \left( S_{1},%
\mathbf{L}_{1},H_{1}\right) =\left( S_{2}S_{1},\mathbf{L}_{2}+S_{2}\mathbf{L}%
_{1},H_{1}+H_{2}+\text{Im}\left\{ \mathbf{L}_{2}^{\ast }S_{2}\mathbf{L}%
_{1}\right\} \right) .
\end{eqnarray*}

The covariance of the Lindblad generator, discussed in Subsection \ref{subec:Cov}, can be described in the following terms.
A triple $\left( U,\mathbf{\beta },\varepsilon \right) $ is said to belong to the central extension of the Euclidean group over the Hilbert space, denoted by $Eu\left( \mathfrak{h}\right) $ if $U$ is an $n\times n$ matrix with complex scalar entries, $\beta =\left[ \beta _{1},\cdots ,\beta _{n}\right] ^{\top }$ is column vector of complex scalars, and $\varepsilon $ is real. We then have the covariance 
\begin{eqnarray*}
\mathcal{L}_{E\vartriangleleft G}=\mathcal{L}_{G}
\end{eqnarray*}
for all $E\in Eu\left( \mathfrak{h}\right) $. 
Note that (\ref{eq:EU_rot}) and (\ref{eq:EU_transl}) correspond to $(I_n,\mathbf{L},H ) \mapsto (U,0,0) \vartriangleleft (I_n,\mathbf{L},H ) $ and
$(I_n,\mathbf{L},H ) \mapsto (I_n,\mathbf{\beta},\epsilon) \vartriangleleft (I_n,\mathbf{L},H ) $, respectively.

\subsection{Weyl Displacement}
Let $\mathbf{\beta}$ be a collection of square-integrable complex-valued function $\beta_k = \beta_k (t)$. We can consider a \lq\lq Weyl Box\rq\rq \, to be a component which displaces the vacuum inputs by the amplitudes $\mathbf{\beta}(t)$. That is,
\begin{eqnarray}
\text{Weyl}_{\mathbf{\beta}} (t) \sim ( I_n , \mathbf{\beta(t) } , 0).
\end{eqnarray}
The output annihilator process of the Weyl Box will then be $B_k (t) + \int_0^t \beta_k (s) ds$.

Placing a Weyl Box after a system $G \sim (I_n , \mathbf{L} , H)$ as a post-filter results in the combined model
\begin{eqnarray*}
  \text{Weyl}_{\mathbf{\beta}} (t)\vartriangleleft G \sim \bigg( I_n , \mathbf{L+\beta(t) } , H+ \mathrm{Im}
	\sum_k  \beta^\ast_k (t)L_k \bigg).
\end{eqnarray*}
Note that the Hudson-Parthasarathy coefficients are now time-dependent.

\section{Simulating the Gisin-Percival equation}
\label{sec:simulation}
We now show that it is possible to use our system theory approach to build a feedback systems such that the conditioned state corresponds to the Gisin-Percival Schr\"{o}dinger wavefunction.
The set-up is described in Figure \ref{fig:GP_filter} below. We take the system $G$ to be the simple model $\left( L_{1}=\frac{1}{\sqrt{2}}R,L_{2}=\frac{i}{\sqrt{2}}R,H\right) $, that is,
\begin{eqnarray}
G \sim \bigg( I_2 , 
\left[
\begin{array} {c}
	\frac{1}{\sqrt{2}}R \\
		\frac{i}{\sqrt{2}}R
\end{array}
\right] , H \bigg).
\end{eqnarray}
Both inputs are displaced by amplitudes $\beta_k (t)$ using Weyl Box post-filters lead to the composite system 
$ G  \vartriangleleft \text{Weyl}_{\mathbf{\beta}} (t)$ given by
\begin{eqnarray}
G &\sim& \bigg( I_2 , 
\left[
\begin{array} {c}
	\frac{1}{\sqrt{2}}R + \beta_1 (t) \\
		\frac{i}{\sqrt{2}}R + \beta_2 (t)
\end{array}
\right] ,  H + \frac{1}{\sqrt{2}} \text{Im} \bigg\{  [ \beta_1 (t) +i \beta_2 (t) ]^\ast R \bigg\} \bigg).
\label{eq:displaced}
\end{eqnarray}

We perform homodyne measurements on the quadratures, recording the essentially classical processes $Y_1 (t)$ and $Y_2(t)$. Using this, we may compute the conditional wave function for the Belavkin-Schr\"{o}dinger equation, but now with the parameters $(I_2,\mathbf{L}, H)$ replaced by the modulated ones in (\ref{eq:displaced}).

\begin{figure}[htbp]
	\centering
		\includegraphics[width=0.4750\textwidth]{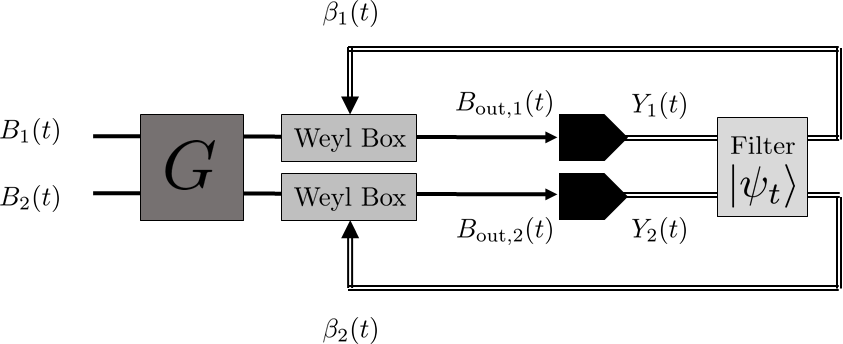}
	\caption{The set-up is as follows. The overall system is driven by a pair of vacuum quantum input fields $B_1(t)$ and $B_2 (t)$.
	These drive the system and the output is subsequently modulated by Weyl Boxes with displacements $\beta_1 (t)$ and $\beta_2 (t)$ respectively. We perform a pair of homodyne measurements on the (modulated) quadratures $Y_k (t) =B_{\text{out},k}(t) +B_{\text{out},k}(t)^\ast$ and use this information to compute the conditioned state $\vert \psi_t \rangle$ using the Belavkin filter. Using the filter we compute a pair of processes $\alpha_k (t)$ which are then fed back in as the displacements ($\beta_k (t)$) for the respective Weyl Boxes.}
	\label{fig:GP_filter}
\end{figure}

At this stage we compute a pair of processes $\alpha_k (t)$ using the conditional state $\psi_t$. For the moment we make only one assumption:

(A) \textit{The functions $\alpha_k (t)$ are purely imaginary functions, depending causally on the measurement records $\{ Y_k (s): k=1,2; 0 \le s \le t \}$.}

The input displacements are then taken to be $\beta_k (t)$ which are Fock space based processes defined by the \emph{pull-back transformation}
\begin{eqnarray}
I \otimes \beta_k (t) = U(t) \alpha_k (t) U(t)^\ast.
\end{eqnarray}
Note that the $\beta_k (t)$ defined in this way are functions of the measurement records $\{ Z_k (s): k=1,2; 0 \le s \le t \}$.

We now try and introduce feedback following the approach in \cite{Gough_JMP_2017}. 
In terms of consistency, we are now taking the displacements $\beta_k (t)$ to be adapted processes on the Fock space, and describable in terms of the processes $Z_1$ and $Z_2$.
By this manner we makes the feedback loops. While the model is more involved than the constant coefficient, or even time-dependent coefficient case, the dynamics is still well defined, and
arguably markovian.

Let us look at the resultant filter in detail. First we have, $dY_k (t) = dZ_k (t) + j_t (L_k+ \beta_k (t) + \text{H.c.}) \, dt$. But as the $\beta_k$ are imaginary, this reduces to
\begin{eqnarray}
dY_1 (t) &=& dZ_1 (t) + \frac{1}{2 \sqrt{2}} j_t (R+R^\ast ) \, dt, \notag \\
dY_2 (t) &=& dZ_2 (t) + \frac{i}{2 \sqrt{2}} j_t (R-R^\ast ) \, dt ,
\end{eqnarray}
and in both cases $dI_k (t) \equiv dY_k (t) - \pi_t (L_k + L^\ast_k ) \, dt$.

The Belavkin-Schr\"{o}dinger equation is now the one with the term (\ref{eq:F}) given by
\begin{eqnarray}
dF (t)  &=&\sum_{k}\big( L_{k} + \alpha_k (t) -\frac{1}{2}\lambda _{k}\big)  
\,dI_{k}\left( t\right) +\bigg( -iH-i \sum_k \text{Im}\{ \alpha^\ast_k (t) L_k \}-\frac{1}{2}\sum_{k}\big( (L_{k}+\alpha_k (t))^{\ast
}\big( L_{k}+\alpha_k (t) \big) \notag \\
&&+\frac{1}{2}\sum_k \lambda _{k}( L_{k}+\alpha_k (t) ) -\frac{1}{8}\sum_k \lambda _{k}^{2}\big) \bigg)   dt 
 .
\label{eq:mod_F}
\end{eqnarray}
Some comments are in order. First of all (\ref{eq:mod_F}) is obtained from (\ref{eq:F}) with the change
$L_k \mapsto L_k + \alpha_k (t)$ and $H \mapsto H + \sum_k \text{Im}\{ L^\ast_k \alpha_k (t) \}$. We note that these equations involve the $\alpha_k (t)$,
which are function of the measurements $\{ Y_k \}$, rather than the $\beta_k (t)$ and the reason for this is that we have to push forward to the 
output picture, see \cite{BvH_ref}.
The $\lambda_k(t)$ should in principle be shifted to $\lambda_k (t) + \alpha_k (t) + \alpha_k (t)^\ast$. But this shift is by the real part of the $\beta_k (t)$
and we recall that these vanish by assumption (A).

We now seek to arrange things so that (\ref{eq:mod_F}) gives us the Gisin-Percival term $dM_{(R,H)} (t)$ and for the noise term, with the previous identification (\ref{eq:Can_Xi}) for the noise, we see that we must have
\begin{eqnarray}
L_1 + \alpha_1 (t) -\frac{1}{2}\lambda _1 (t) &\equiv& \frac{1}{\sqrt{2}} \big( R - c (t) \big), \notag \\
L_2 + \alpha_2 (t) -\frac{1}{2}\lambda _2 (t)  &\equiv& \frac{i}{\sqrt{2}} \big( R - c (t) \big),
\end{eqnarray}
and this implies the following choice
\begin{eqnarray}
\alpha_1 (t) & \equiv & - \frac{1}{2 \sqrt{2}} \langle \psi_t \vert ( R-R^\ast ) \psi_t \rangle , \notag \\
\alpha_2 (t) & \equiv & -\frac{i}{2 \sqrt{2}} \langle \psi_t \vert ( R+R^\ast ) \psi_t \rangle .
\label{eq:alpha}
\end{eqnarray}
We remark that both of these choices determine purely imaginary expressions, and so the ansatz (A) is in place. In fact, if we decompose $c(t)=\langle \psi_t \vert R\, \psi_t \rangle$ into real and imaginary parts $c'(t) + i c''(t)$, then
\begin{eqnarray}
\lambda_1 (t) = \sqrt{2} c'(t), \quad \lambda_2 (t) = - \sqrt{2} c''(t) \notag \\
\alpha_1 (t) = - \frac{i}{\sqrt{2}} c''(t), \quad \alpha_2 (t) =- \frac{i}{\sqrt{2}} c'(t).
\end{eqnarray}

With this choice of the $\alpha_k (t)$'s, we find that the $dt$ term in (\ref{eq:mod_F}) is
\begin{eqnarray}
 -iH -\frac{1}{2}\sum_{k} L_{k}^\ast L_k - \sum_k \big( \alpha_k^\ast (t) -\frac{1}{2}\lambda_k (t) \big) L_k  
 -\frac{1}{2}\sum_{k}\big(  \alpha_k (t))^\ast  \alpha_k (t)  
-\lambda _{k}\alpha_k (t)  +\frac{1}{4}\lambda _{k}^{2}\big)  
 .
\end{eqnarray}
and it is straightforward to check that
\begin{eqnarray}
&&\sum_k \big( \alpha_k^\ast (t) -\frac{1}{2}\lambda_k (t) \big) L_k \equiv c(t)^\ast R, \notag \\
&&\frac{1}{2}\sum_{k}   \alpha_k (t))^\ast  \alpha_k (t)  = \frac{1}{4}\sum_k \lambda _{k}^{2} \equiv \frac{1}{2}|c(t)|^2, \notag \\
&& \sum_k \lambda _{k}\alpha_k (t)  =0.
\end{eqnarray}
Substituting these explicit expressions in to (\ref{eq:mod_F}) leads to $dF(t) = dM_{(R,H)} (t)$, which is now exactly
as in (\ref{eq:M}). In other words, the set-up with feedback leads to a conditioned state dynamics which is the same as the Gisin-Percival-Schr\"{o}dinger equation.

\section{Conclusion}
The Gisin-Percival equation appears distinct from the usual stochastic Schr\"{o}dinger equations obtained through quantum filtering models. However, as shown by Wiseman and Milburn \cite{WM93} it may be obtained as a high detuning limit of a homodyne measurement. Here we have shown that the Gisin-Percival equation is equivalent up to an overall phase to a specific quantum trajectories problem (in fact, we obtain a general class, and work with the simplest such representative). The phase term will however be a stochastic process with non-trivial quadratic variation. The quantum measurement scheme needs to ensure that the covariance symmetry of the Gisin-Percival equation hold, and in the simplest case this is done by indirect measurement where two input-output channels couple to the same system operator $R$ and we make homodyne measurements of orthogonal quadratures of the output fields.

In fact, we show that an appropriate feedback of the measured signals allows us to include this phase term, leading to an exact analogue simulation of the Gisin-Percival state diffusion in terms of quantum trajectories.

The quantum state diffusion equation of Gisin and Percival is therefore not distinguishable from standard quantum trajectory models.

\bigskip

\begin{acknowledgments}
The author wishes to thank Hendra Nurdin for several useful discussions, and for pointing out preprint \cite{Partha_Usha}, as well as several discussions with K.R. Parthasarathy and A.R. Usha Devi.
\end{acknowledgments}

\quad

\end{document}